\let\MYcaption\@makecaption
\let\@makecaption\MYcaption
\def\BibTeX{{\rm B\kern-.05em{\sc i\kern-.025em b}\kern-.08em
T\kern-.1667em\lower.7ex\hbox{E}\kern-.125emX}}
\newcommand\copyrighttext{
\footnotesize \copyright\ 2020 IEEE. Personal use of this material is permitted. Permission from IEEE must be obtained for all other uses, in any current or future media, including reprinting/republishing this material for advertising or promotional purposes, creating new collective works, for resale or redistribution to servers or lists, or reuse of any copyrighted component of this work in other works. DOI: 10.23919/ACC45564.2020.9147953
}
\newcommand\copyrightnotice{
\begin{tikzpicture}[remember picture,overlay]
\node[anchor=south,yshift=0.2in] at (current page.south) {\parbox{\dimexpr\textwidth-\fboxsep-\fboxrule\relax}{\copyrighttext}};
\end{tikzpicture}\vspace{-0.825\baselineskip}
}
\theoremstyle{plain}
\newtheorem{theorem}{Theorem}
\theoremstyle{definition}
\def\({\left(}
\def\){\right)}
\def\[{\left[}
\def\]{\right]}
\def\dbf{{\bf d}}
\def\Kbf{{\bf K}}  
\def\ubf{{\bf u}}
\def\xbf{{\bf x}}
\def\deltabf{\bm{\delta}}
\def\Phibf{\bm{\Phi}}
\def\Rmbb{\mathbb{R}}  
\def\Hcal{\mathcal{H}}  
\def\Lcal{\mathcal{L}}  
\def\Rcal{\mathcal{R}}  
\def\Scal{\mathcal{S}}  
\newcommand{\blue}[1]{{#1}}
\newcommand{\SFpair}{\{\Phibf_\xbf, \Phibf_\ubf\}}
\def\mat#1{\begin{bmatrix}#1\end{bmatrix}}
\def\t{[t]}
\def\tn{[t+1]}
\def\tm{[t-1]}
\def\fig#1{Fig.~\ref{fig:#1}}
\def\subfig#1#2{Fig.~\ref{fig:#1}(\subref{subfig:#1-#2})}
\def\sec#1{Section~\ref{sec:#1}}
\def\tab#1{Table~\ref{tab:#1}}
\def\thm#1{Theorem~\ref{thm:#1}}
\def\eqn#1{\eqref{eqn:#1}}
\def\st{{\rm s.t.}}
\def\OptConsSep{&&\quad}
\newcommand{\OptMin}[2]{
\begin{alignat}{2}
\text{minimize}\ &\ #1 \nonumber \\
\st\ #2
\end{alignat}
}
\newcommand\OptCons[3]{
&\ #1
\ifx\\#2\\ \else \OptConsSep #2 \fi%
\ifx\\#3\\ \nonumber \else \label{eqn:#3} \fi%
}
\title{\LARGE \bf Deployment Architectures for Cyber-Physical Control Systems}
\author{Shih-Hao Tseng and James Anderson
\thanks{Shih-Hao Tseng and James Anderson are with the Division of Engineering and Applied Science, California Institute of Technology, Pasadena, CA 91125, USA.  Emails: {\tt\small \{shtseng,james\}@caltech.edu}}
\copyrightnotice
}
\begin{document}

\maketitle
\thispagestyle{empty}
\pagestyle{empty}

\bstctlcite{IEEE_BSTcontrol}

\begin{abstract}
We consider the problem of how to deploy a controller to a (networked) cyber-physical system (CPS). Controlling a CPS is an involved task, and synthesizing a controller to respect sensing, actuation, and communication constraints is only part of the challenge. In addition to controller synthesis, one should also consider how the controller will work in the CPS. Put  another way, the cyber layer and its interaction with the physical layer need to be taken into account.

In this work, we aim to bridge the gap between theoretical controller synthesis and practical CPS deployment. We adopt the system level synthesis (SLS) framework to synthesize a state-feedback controller and provide a deployment architecture for the standard SLS controller. Furthermore, we derive a new controller realization for open-loop stable systems and introduce four different architectures for deployment, ranging from fully centralized to fully distributed. Finally, we compare the trade-offs among them in terms of robustness, memory, computation, and communication overhead.
\end{abstract}

\paragraph*{Notation and Terminology}

Let $\Rcal\Hcal_{\infty}$ denote the set of stable rational proper transfer matrices, and $z^{-1}\Rcal\Hcal_{\infty} \subset \Rcal\Hcal_{\infty}$ be the subset of strictly proper stable transfer matrices. Lower- and upper-case letters (such as $x$ and $A$) denote vectors and matrices respectively, while bold lower- and upper-case characters and symbols (such as $\ubf$ and ${\Phibf_\ubf}$) are reserved for signals and transfer matrices. Let $A^{ij}$ be the entry of $A$ at the $i^{\rm th}$ row and $j^{\rm th}$ column. We define $A^{i\star}$ as the $i^{\rm th}$ row and $A^{\star j}$ the $j^{\rm th}$ column of $A$. We use ${\Phi_u}[\tau]$ to denote the $\tau^{\rm th}$ spectral element of a transfer function ${\Phibf_\ubf}$, i.e., ${\Phibf_\ubf} = \sum\limits_{\tau=0}^{\infty} z^{-\tau} {\Phi_u}[\tau]$.

We briefly summarize below the major terminology:
\begin{itemize}
\item {\bf Controller model:} a (linear) map from the state vector to the control action.
\item {\bf Realization:} a control block diagram/state space dynamics based on some controller model.
\item {\bf Architecture:} a cyber-physical system structure built from basic components.
\item {\bf Synthesize:} derive a controller model (and some realization).
\item {\bf Deploy/Implement:} map a controller model (through a realization) to an architecture.
\end{itemize}
\section{Introduction}\label{sec:introduction}
We consider a linear time-invariant (LTI) system with a set of sensors $s_i, i = 1, \dots, N_x$ and a set of actuators $a_k, k = 1, \dots, N_u$, with plant dynamics
\begin{align}
x\tn = A x\t + B u\t + d_x\t
\label{eqn:system-dynamics}
\end{align}
where $x[t] \in \Rmbb^{N_x}$ is the state vector, $u[t] \in \Rmbb^{N_u}$ is the control, and $d_x[t] \in \Rmbb^{N_x}$ is the disturbance. Suppose the system is open-loop stable, i.e., $(zI-A)^{-1}B \in \Rcal\Hcal_{\infty}$. The goal of this paper is to address how a state-feedback controller can be deployed to this system and what the corresponding  cyber-physical structures and trade-offs are.

\begin{figure}
\centering
\includegraphics{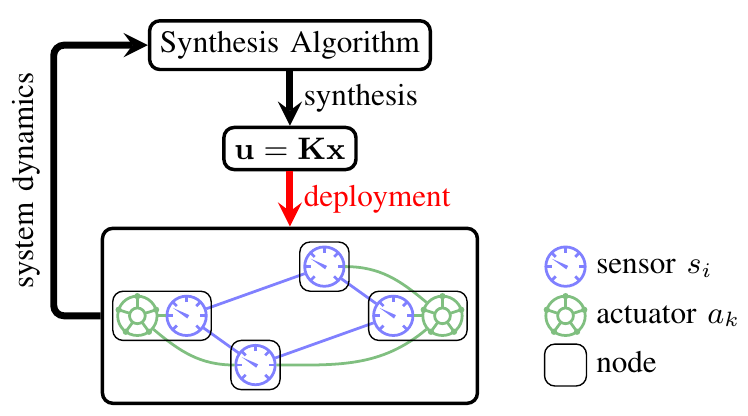}
\caption{A model-based system control scheme consists of two phases -- synthesis and deployment. We can adopt SLS in the synthesis phase to obtain the optimal controller model $\Kbf$, and this paper focuses on how to deploy $\Kbf$ to the underlying CPS.}
\label{fig:flow-chart}
\end{figure}

A model-based approach to control design involves two phases: the \emph{synthesis phase} and the \emph{deployment phase}, as illustrated in \fig{flow-chart}. In the synthesis phase, the control engineer derives the desired controller model by some synthesis algorithm based on a  model of the system dynamics, a suitable objective function, and operating constraints on sensing, actuation, and communication capabilities. The \emph{optimal} controller in the model-based sense is thus the model achieving the best objective value.

Synthesizing an optimal controller for a cyber-physical system is, in general, a daunting task. Recently, a framework named \emph{system level synthesis (SLS)} was proposed  to facilitate distributed controller synthesis for large-scale (networked) systems \cite{doyle2017system,wang2019system,AndDLM19}. Instead of designing the controller itself, SLS directly synthesizes desired closed-loop system responses subject to system level constraints, such as localization constraints \cite{WanMD18} and state and input constraints \cite{CheA19}. Using the closed-loop system response, SLS derives the optimal linear controller model, which admits multiple (mathematically equivalent)  control block diagrams/state space realizations (or simply  \emph{realizations}) \cite{wang2019system,anderson2017structured}.

On the other hand, the deployment phase (often referred to as implementation) is concerned with how to map the derived controller model/realization to the target system.
We can usually implement one controller realization by multiple different \emph{deployment architectures} (or simply  \emph{architectures}). Although all  architectures lead to the ``optimal'' controller, they can differ in  aspects other than the objective, for example; memory requirements, robustness to failure, scalability, and financial cost. Therefore, it is important to consider the \emph{trade-offs} among those architectures, in order to deploy the most suitable architecture.

Approaches to deployment   vary greatly in the literature. In the control  literature, most work gives the controller design at the realization level and implicitly relies on some interface provided by the underlying CPS for deployment \cite{fink2011robust,schwager2011eyes}. Providing the appropriate abstraction and programming interface is itself a design challenge \cite{lee2008cyber,khaitan2014design,lee2015cyber,hu2016robust}. Some papers also examine the deployment down to the circuit level \cite{shao2006new,Jer14}.
The networking/system community, on the other hand, mostly adopts a bottom-up instead of a top-down approach to system control. It usually involves some carefully designed gadgets/protocols and a coordination algorithm \cite{hill2000system,hamed2018chorus,dhekne2019trackio}.

In this work we take an alternative approach to deployment, which lies between the realization and the circuit level. Rather than binding the design to some specific hardware, we specify the basic components of the system and use them to implement the derived controller realization. As such, we can easily map our architectures to real CPS -- as long as the system supports all basic functions. It is the flexibility of the SLS approach which makes such a process possible.

The paper is organized as follows. In \sec{background}, we briefly review \emph{system level synthesis}, propose a new, simpler control realization, and show it is internally stabilizing. With this  realization, we propose four different partitions and their corresponding deployment architectures in \sec{architecture}, namely, the centralized (\sec{centralized}), global state (\sec{global-state}), naive and memory conservative distributed (\sec{distributed}) architectures. Then, in \sec{comparison} we discuss the trade-offs made by the architectures on robustness, memory, computation, and communication. Finally, we conclude the paper with possible future research directions in \sec{future}.

\section{Synthesis Phase}\label{sec:background}
We briefly review the SLS method for the distributed control synthesis phase. Then we describe two ``standard'' SLS controller realizations and draw attention to their respective architectures for the deployment phase. Finally, we derive a new controller realization for open-loop stable systems which we will show in later sections has many favorable properties in the deployment stage.

\subsection{System Level Synthesis (SLS) -- An Overview}

To synthesize the state-feedback closed-loop controller for the system described in \eqn{system-dynamics}, SLS introduces the \emph{system response} $\SFpair$ transfer matrices. The system response is the closed-loop mapping from disturbance $\dbf_\xbf$ to state $\xbf $ and control action $\ubf$, under the feedback policy $\ubf = \Kbf \xbf$. Compactly this is written as
\begin{align*}
\mat{\xbf \\ \ubf} =
\mat{{\Phibf_\xbf} \\ {\Phibf_\ubf} }
\dbf_\xbf.
\end{align*}
where $\{{\Phibf_\xbf},{\Phibf_\ubf}\}$ have the realizations $\Phibf_\xbf = (zI-A-B\Kbf)^{-1}$ and $\Phibf_\ubf = \Kbf(zI-A-B\Kbf)^{-1}$. It was shown in~\cite{wang2019system} that the set of all achievable internally stabilizing controllers is parametrized by an affine subspace.\footnote{Recent work has shown the equivalence between the SLS closed-loop parameterization and the well-known Youla parameterization \cite{ZheFPLK19}.} Using this property, \emph{system level synthesis} problem takes the form:
\OptMin{
g({\Phibf_\xbf},{\Phibf_\ubf})
}{
\OptCons{
\mat{zI-A & -B}
\mat{{\Phibf_\xbf}\\
{\Phibf_\ubf}
}
=
I
}{}{constraint}\\
\OptCons{
{\Phibf_\xbf},{\Phibf_\ubf} \in z^{-1}\Rcal\Hcal_{\infty}
}{}{}\\
\OptCons{
\mat{{\Phibf_\xbf}\\
{\Phibf_\ubf}} \in \Scal .
}{}{}
}
As long as $g$ is a convex functional, \blue{e.g., $\Lcal_1$, $\Hcal_{\infty}$, or $\Hcal_2$ norms \cite{AndDLM19},}  and $\Scal$ defines a convex set, the SLS problem~\eqn{constraint} is convex. For details on how classical optimal control problems can be formulated in this manner, as well as many other aspects of SLS, the reader is referred to~\cite[\S 2.2]{AndDLM19}. The constraint set $\Scal$ is used to provide spatial and temporal locality to the closed-loop response. This is typically done by enforcing sparsity on the spectral components of $\{{\Phibf_\xbf},{\Phibf_\ubf}\}$ and making them finite-impulse-response (FIR) filters. Full details can be found in~\cite{AndDLM19,WanMD18}.

\begin{figure}
\centering
\includegraphics{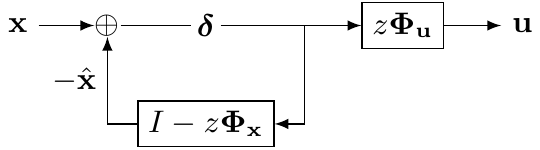}
\caption{SLS control block diagram derived in \cite{doyle2017system}. The signals of interest are $\deltabf  = \xbf -\bf{\hat{x}}$, $\ubf = z \Phibf_\ubf$, and $-\hat{\xbf} = I-z\Phibf_\xbf$. These follow from the block diagram or by taking the z-transform of~\eqn{SLSdefault}.
}
\label{fig:SLS-controller-diagram}
\end{figure}

With~\eqn{constraint} solved, a stabilizing controller is readily obtained from the system response:
\begin{align}\label{eqn:K}
\Kbf = {\Phibf_\ubf} {\Phibf_\xbf}^{-1}.
\end{align}
The controller above is useful for theoretical purposes. However, inverting $\Phibf_\xbf$ is undesirable in most cases as it is heavily dependent on conditioning, and all the structure that $\Phibf_\xbf$ has will likely be lost. Indeed, a controller realization based on the block diagram in \fig{SLS-controller-diagram} is more suitable for distributed control. In state space, it takes the form:
\begin{align}\label{eqn:SLSdefault}
\delta\t &= x\t - \hat{x}\t,\nonumber \\
u\t & = \sum_{\tau \geq 1}\Phi_u[\tau]\delta[t + 1 - \tau],\\
\hat{x}\tn & = \sum_{\tau \geq 2}\Phi_x[\tau]\delta[t + 2 - \tau], \nonumber
\end{align}
where $-\hat{\xbf}$ is the output of the $I-z\Phibf_\xbf$ block. This controller is a disturbance-rejection controller as the signal $\deltabf$ estimates the disturbance signal $\dbf_\xbf$. From a realization perspective, this is better than realizing \eqn{K} by a single block. Unlike \eqn{K}, which inverts $\Phibf_\xbf$, any structure imposed on $\SFpair$ is inherited by the two blocks in \fig{SLS-controller-diagram}, i.e., closed-loop constraints are passed on to the controller. One of the main contributions of this paper is to show that there is another controller realization, which, from a deployment perspective, may be more economical to implement whilst maintaining the desirable properties of~\eqn{SLSdefault}.

\subsection{A Simpler SLS Controller Realization}
We have seen in the previous section how the distributed controller obtained from ${\Phibf_\xbf}$ and ${\Phibf_\ubf}$ can be realized. To give a bit more insight into this design, observe that
\begin{align*}
\Kbf = {\Phibf_\ubf}{\Phibf_\xbf}^{-1} = (z{\Phibf_\ubf})(z{\Phibf_\xbf})^{-1}.
\end{align*}
The realization in \fig{SLS-controller-diagram} then follows by putting $z{\Phibf_\ubf}$ in the forward path and realizing $(z{\Phibf_\xbf})^{-1}$ as the feedback path through the $I-z{\Phibf_\xbf}$ block.

Although the design in \fig{SLS-controller-diagram} is straightforward, and certainly preferable to direct implementation via~\eqn{K}, we note that two convolutions are required to construct $u\t$. Furthermore, the memory required to hold the convolution kernels grows rapidly with horizon length (of the FIR constraint in $\mathcal S$) and number of system states~\cite{anderson2017structured}. We will show that when the plant is open-loop stable, we can simplify the design by replacing a convolution with two matrix-multiplies as in \fig{controller-diagram}.
In \sec{centralized} we carry out an explicit cost comparison between the architectures resulting from \fig{SLS-controller-diagram} and \fig{controller-diagram}.

\begin{theorem}\label{thm:controller}
Let $A \in  \Rmbb^{N_x \times N_x}$ in~\eqn{system-dynamics} be Schur stable. The dynamic state-feedback controller $\ubf = \Kbf \xbf$ realized via
\begin{subequations}\label{eq:controller_time}
\begin{align}
\delta\t =&\ x\t - A x\tm - B u\tm,
\label{eqn:state-space-delta}\\
u\t =&\ \sum\limits_{\tau\geq 1} {\Phi_u}[\tau]\delta[t+1-\tau],
\label{eqn:state-space-u}
\end{align}
\end{subequations}
is internally stabilizing and is described by the block diagram in \fig{controller-diagram}.
\end{theorem}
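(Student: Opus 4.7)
The plan is to prove \thm{controller} in two pieces. First, I would show that the recursions~\eqn{state-space-delta}--\eqn{state-space-u} realize the block diagram in \fig{controller-diagram}; second, I would verify internal stability by injecting test disturbances at every internal node of the controller and showing that each resulting closed-loop transfer matrix lies in $\Rcal\Hcal_\infty$.

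The correspondence with the diagram is essentially immediate: $z$-transforming~\eqn{state-space-delta} produces $\deltabf = \xbf - z^{-1}(A\xbf + B\ubf)$, which is the ``predictor'' feedback loop built from the two matrix multiplies $A$, $B$ and a unit delay; and~\eqn{state-space-u} $z$-transforms to $\ubf = z\Phibf_\ubf\deltabf$, the forward block, which is proper and stable because $\Phibf_\ubf \in z^{-1}\Rcal\Hcal_\infty$. At this stage I would also record the key structural observation that substituting the plant dynamics~\eqn{system-dynamics} into~\eqn{state-space-delta} gives $\delta[t] = d_x[t-1]$, so $\deltabf = z^{-1}\dbf_\xbf$ and hence $\ubf = \Phibf_\ubf\dbf_\xbf$, $\xbf = \Phibf_\xbf\dbf_\xbf$---the intended SLS closed-loop map is recovered.

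For internal stability I would inject independent perturbations ${\bf d}_{\bm{\delta}}$ and ${\bf d}_{\bf u}$ at the outputs of~\eqn{state-space-delta} and~\eqn{state-space-u} in addition to the plant disturbance $\dbf_\xbf$, and then read off the nine transfer matrices from $\{\dbf_\xbf, {\bf d}_{\bm{\delta}}, {\bf d}_{\bf u}\}$ to $\{\xbf, \ubf, \deltabf\}$. A short calculation along the lines of the previous paragraph (but keeping the extra terms) shows that each entry is a product/sum of $\Phibf_\xbf$, $\Phibf_\ubf$, $z\Phibf_\ubf$, $(zI-A)^{-1}B$, and $z^{-1}I$; all of these live in $\Rcal\Hcal_\infty$, using $\Phibf_\xbf, \Phibf_\ubf \in z^{-1}\Rcal\Hcal_\infty$ from the SLS affine constraint, $z\Phibf_\ubf \in \Rcal\Hcal_\infty$ because $\Phibf_\ubf$ is strictly proper, and $(zI-A)^{-1}B \in \Rcal\Hcal_\infty$ by the Schur hypothesis on $A$.

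The only place the argument really needs the open-loop stability assumption---and the expected obstacle to sanity-check carefully---is the predictor $Ax[t-1]+Bu[t-1]$ embedded in the controller: it runs an internal open-loop copy of the plant, so its modes appear as $(zI-A)^{-1}B$ factors in the transfer matrices above, and Schur stability of $A$ is precisely what prevents these from being unstable hidden modes. This is the structural difference from the realization in \fig{SLS-controller-diagram}, whose $I-z\Phibf_\xbf$ feedback closes a loop around the internal plant copy and therefore needs no assumption on $A$.
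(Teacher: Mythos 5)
Your proposal is correct and follows essentially the same route as the paper: the same three injected perturbations $\dbf_\xbf$, $\dbf_\ubf$, $\dbf_{\deltabf}$, the same nine closed-loop transfer matrices, and the same stability argument from $\Phibf_\xbf,\Phibf_\ubf \in z^{-1}\Rcal\Hcal_\infty$ together with Schur stability of $A$. The only cosmetic difference is that the paper establishes the block diagram by substituting the affine constraint into $\Kbf = \Phibf_\ubf\Phibf_\xbf^{-1}$ to get $\Kbf = (z\Phibf_\ubf)(I + z^{-1}B(z\Phibf_\ubf))^{-1}(I - z^{-1}A)$, whereas you read the blocks directly off the recursion; your added observation that $\deltabf = z^{-1}\dbf_\xbf$ in the nominal loop and your remark on where open-loop stability is actually needed are both consistent with the paper's discussion.
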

\begin{proof}
To derive block diagram in \fig{controller-diagram}, observe that any feasible $\SFpair$ pair satisfies \eqn{constraint}, which implies
\begin{align}
{\Phibf_\xbf} = (zI-A)^{-1}(I + B{\Phibf_\ubf}).
\label{eqn:R-to-M}
\end{align}
In other words, the information of ${\Phibf_\xbf}$ is already encoded in ${\Phibf_\ubf}$. Substituting \eqn{R-to-M} into the transfer function $\Kbf={\Phibf_\ubf}{\Phibf_\xbf}^{-1}$, we have
\begin{align}
\Kbf =&\ {\Phibf_\ubf} (I + B{\Phibf_\ubf})^{-1} (zI-A) \nonumber \\
=&\ (z{\Phibf_\ubf}) (I + z^{-1} B (z{\Phibf_\ubf}))^{-1} (I-z^{-1}A).
\label{eqn:new-transfer-function}
\end{align}
The block diagram follows immediately from~\eqn{new-transfer-function}. The time domain dynamics~\eqref{eq:controller_time} are obtained from an inverse z-transform.

To prove internal stability, additional perturbations are added to the closed-loop, and it is shown that they decay with time. Consider the plant and controller connected in feedback as in \fig{internal-stability}. It is sufficient to examine how the internal states $\xbf$, $\ubf$, and $\deltabf$ are affected by those external signals. The signals are related via the following equations:
\begin{align*}
z\xbf =&\ A \xbf + B\ubf + \dbf_\xbf, \\
\ubf =&\ (z{\Phibf_\ubf})\deltabf + \dbf_\ubf, \\
\deltabf =&\ -z^{-1}B (z{\Phibf_\ubf})\deltabf + (I-z^{-1}A)\xbf + \dbf_{\deltabf},
\end{align*}
which, after rearranging, gives
\begin{align*}
\mat{
\xbf \\
\ubf \\
\deltabf
} =
\mat{
{\Phibf_\xbf} & {\Phibf_\xbf} B & (zI-A)^{-1}B(z{\Phibf_\ubf}) \\
{\Phibf_\ubf} & I + {\Phibf_\ubf} B & z{\Phibf_\ubf} \\
z^{-1} & z^{-1}B & I
}
\mat{
\dbf_\xbf \\
\dbf_\ubf \\
\dbf_{\deltabf}
},
\end{align*}
where ${\Phibf_\xbf} = (zI-A)^{-1} (I+B{\Phibf_\ubf})$. If all nine entries in the transfer matrix above are stable, then the (bounded) injected signals $\dbf_\xbf$, $\dbf_\ubf$, and $\dbf_{\deltabf}$ produce  bounded signal $\xbf$, $\ubf$, and $\deltabf$, which implies internal stability. By assumption, $(zI-A)^{-1}B$ is stable, and~\eqn{constraint} constrains ${\Phibf_\xbf}, {\Phibf_\ubf}$ to $z^{-1}\Rcal\Hcal_{\infty}$. It is then easy to verify stability of all nine entries, thus proving the system is internally stable.
\end{proof}

\begin{figure}
\centering
\includegraphics{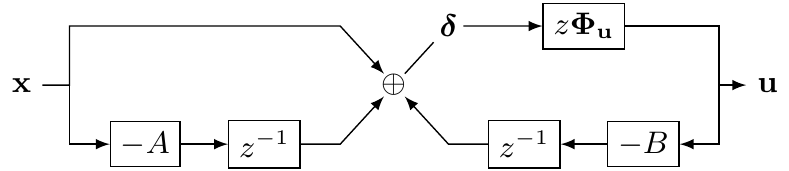}
\caption{A simpler control block diagram corresponding to~\eqref{eq:controller_time}. Note the single convolution $z{\Phibf_\ubf}$ according to $\Kbf = {\Phibf_\ubf}{\Phibf_\xbf}^{-1} = (z{\Phibf_\ubf}) (I + z^{-1} B (z{\Phibf_\ubf}))^{-1} (I-z^{-1}A)$.}
\label{fig:controller-diagram}
\end{figure}

The advantage of \fig{controller-diagram} is that it uses only one convolution $z{\Phibf_\ubf}$. Also, when $\Phibf_\ubf$ has finite impulse response (FIR) with horizon $T$, we have
\begin{align*}
u\t =&\ \sum\limits_{\tau=1}^{T} {\Phi_u}[\tau]\delta[t+1-\tau]
\end{align*}
regardless of whether ${\Phibf_\xbf}$ is FIR or not.

We remark that it is possible to internally stabilize some open-loop unstable systems by the controller in \thm{controller} (\fig{controller-diagram}). Consider a decomposition of the system matrix such that $A = A_u + A_s$ where $A_u$ is unstable and $A_s$ is Schur stable. Then using the robustness results from~\cite{MatWA17}, the controller designed for $(A_s, B)$ will stabilize $(A,B)$ if  $\|A_u \Phibf_\xbf\|<1$ for any induced norm.

\begin{figure}
\centering
\includegraphics{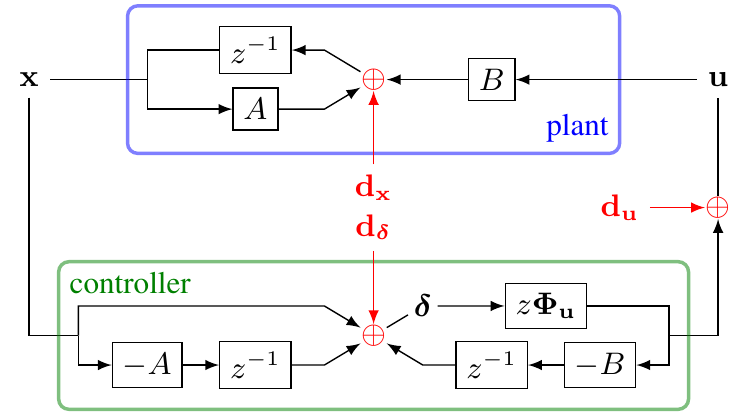}
\caption{To show internal stability, we interconnect the proposed controller (\fig{controller-diagram}) to the plant and examine the effect of injected external signals $\dbf_\xbf$, $\dbf_\ubf$, and $\dbf_{\deltabf}$ on the internal states $\xbf$, $\ubf$, and $\deltabf$.}
\label{fig:internal-stability}
\end{figure}

\section{Deployment Architectures}\label{sec:architecture}
We now explore the controller architectures for the deployment phase. The crux of our designs centers on the partitions of the controller realization described in \thm{controller} and illustrated in \fig{controller-diagram}. We first introduce the basic components. Using those components, we propose the centralized, global state, and distributed architectures accordingly. As long as the real system is capable of providing the basic components (or some equivalent parts), mapping the architectures into the system is straightforward.

\subsection{Basic Components}
\begin{figure}
\centering
\subcaptionbox{Computation and storage components\label{subfig:architecture-components-computation-and-storage}}{
\includegraphics{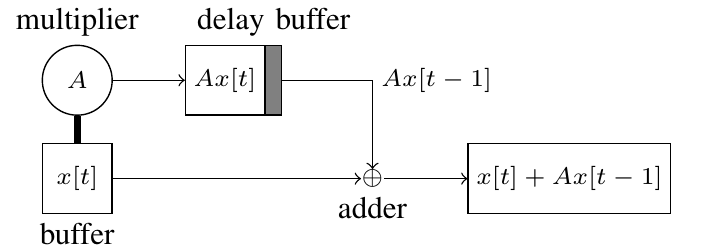}
}\\[0.5\baselineskip]
\subcaptionbox{Communication components\label{subfig:architecture-components-communication}}{
\includegraphics{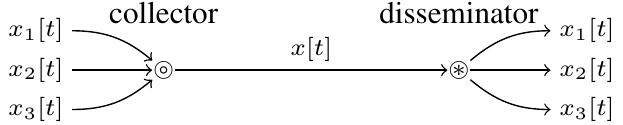}
}
\caption{The basic functions that a node in the system can perform. These functions serve as the basic components of the proposed deployment architectures. They can be categorized into (\subref{subfig:architecture-components-computation-and-storage}) computation and storage components and (\subref{subfig:architecture-components-communication}) communication components.}
\label{fig:architecture-components}
\end{figure}

We assume that the nodes in the target system are capable of performing the functions shown in \fig{architecture-components}. These functions cover variable storage, arithmetic computation, and communication. Below we describe these functions in details.

The essential component is a \emph{buffer}. A buffer is where the system keeps the value of a variable. It can be a memory device such as a register or merely a collection of some buses (wires). On the other hand, a \emph{delay buffer} is the physical implementation of $z^{-1}$ in a block diagram, which keeps the variable received in the current time step and releases its value in the next time step.

For computation, a \emph{multiplier} with a matrix $A$ senses the variable in a targeted buffer
and multiplies it by $A$ as the output. An \emph{adder} performs entry-wise addition of two vectors or matrices of compatible dimensions. Although an adder is a two-input component, we can merge cascaded adders into a multiple-input adder in practice.

A node can also communicate with other nodes through \emph{disseminator-collector pairs}. A \emph{disseminator} sends some parts of a variable to designated nodes. At the receiving side, a \emph{collector} assembles the received parts appropriately to reconstruct the desired variable.
\subsection{Centralized Architecture}\label{sec:centralized}
The most straightforward deployment architecture is the centralized architecture, which partitions the block diagram as in \fig{centralized}. It packs all the control functions into one node, the centralized controller. The centralized controller maintains communications with all sensors and actuators to collect state information and dispatch the control signal.

\begin{figure}
\centering
\includegraphics{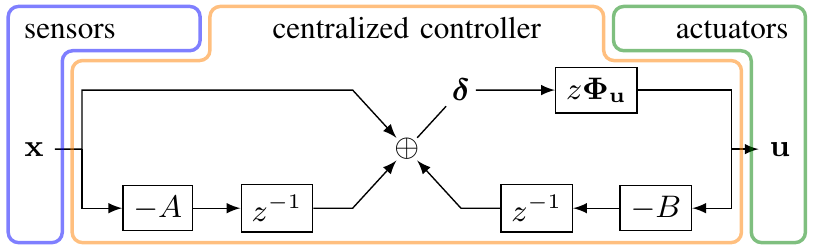}
\caption{The partitions of the block diagram for the centralized architecture. \blue{The architecture of the centralized controller is depicted in \fig{centralized-architecture}.}}
\label{fig:centralized}
\vspace*{\baselineskip}
\includegraphics{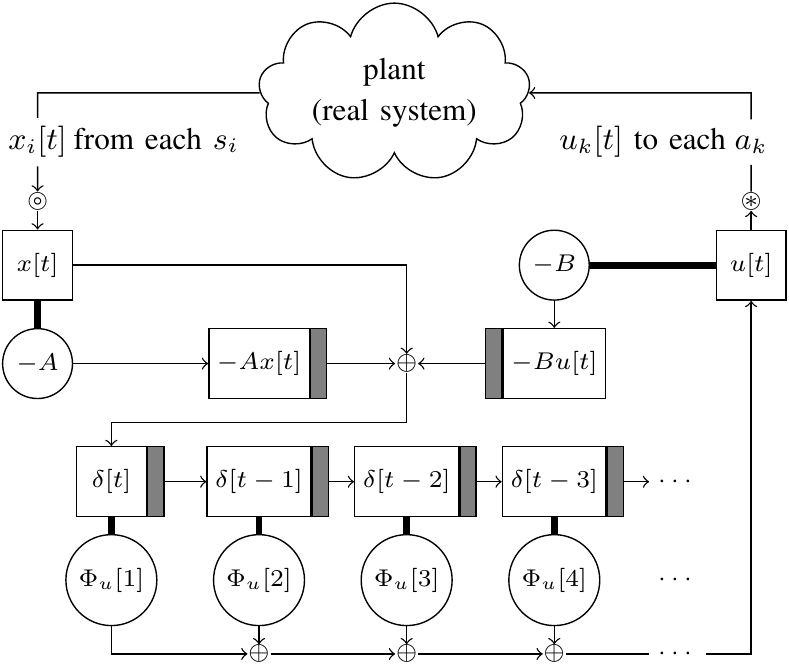}
\caption{The architecture of the centralized controller.}
\label{fig:centralized-architecture}
\end{figure}

\fig{centralized-architecture} shows the architecture of the centralized controller. For each time step $t$, the controller first collects the state information $x_i\t$ from each sensor $s_i$ for all $i = 1,\dots,N_x$. Along with the stored control signal $u$, the centralized controller computes $\delta\t$ as in \eqn{state-space-delta}. $\delta\t$ is then fed into an array of delay buffers and multipliers to perform the convolution \eqn{state-space-u} and generate the control signals. The control signals $u_k\t$ are then sent to each actuator $a_k$ for all $k = 1,\dots, N_u$.

Deploying a synthesized solution to the centralized architecture is simple: We take the spectral components ${\Phi_u}[\tau]$ of ${\Phibf_\ubf}$ from $\tau = 1,2,\dots$ and insert them into the array of the multipliers. Also, $A$ and $B$ are adopted directly from the system model.

We can compare the centralized architecture of \fig{controller-diagram} with the architecture of the original block diagram \fig{SLS-controller-diagram}, which is shown in \fig{original-architecture}. As mentioned in \sec{background}, the original architecture (\fig{original-architecture}) is expensive both computationally and storage-wise.  Specifically, when both ${\Phibf_\xbf}$ and ${\Phibf_\ubf}$ are FIR with horizon $T$, the original architecture depicted in \fig{original-architecture} performs
\begin{equation}
\underbrace{\strut (T-1)N_x^2}_{\Phi_x[\cdot]\delta[\cdot]} + \underbrace{\strut T N_x N_u}_{\Phi_u[\cdot]\delta[\cdot]}\nonumber
\end{equation}
scalar multiplications per time step and needs
\begin{equation}
\underbrace{\strut (T-1)N_x^2}_{\Phi_x[\cdot]} +
\underbrace{\strut T N_x N_u}_{\Phi_u[\cdot]} +
\underbrace{\strut (T + 2) N_x + N_u}_{\delta[\cdot],\ x\t,\ -\hat{x}[t+1], \text{~and~} u\t}\nonumber
\end{equation}
scalar memory locations to store all variables and multipliers. On the other hand, \fig{centralized-architecture} performs
\begin{equation}
\underbrace{\strut N_x^2 + N_x N_u}_{-Ax\t\text{~and~}-Bu\t} +
\underbrace{\strut T N_x N_u}_{\Phi_u[\cdot]\delta[\cdot]}
\label{eqn:centralized-computation-requirement}
\end{equation}
scalar multiplications and needs
\begin{align}
\underbrace{\strut N_x^2 + N_x N_u}_{A\text{~and~}B} +
&\underbrace{\strut 2 N_x }_{-Ax\t\text{~and~}-Bu\t} +\nonumber\\
&\underbrace{\strut TN_x N_u}_{\Phi_u[\cdot]\delta[\cdot]} +
\underbrace{\strut (T + 1) N_x + N_u}_{\delta[\cdot],\ x\t, \text{~and~} u\t}
\label{eqn:centralized-storage-requirement}
\end{align}
storage space, which is more economic when $N_x \geq N_u$, $N_x \geq 2$, and $T > 3$, as is the case for under-actuated systems, and typical for distributed control problems.

We note that the above analysis is based on the most general case -- assuming that all matrices are dense. In reality, $A$ and the spectral components $\Phi_x[\cdot], \Phi_u[\cdot]$ can be sparse and structured, which admits some specialized multiplier and buffer designs to significantly reduce the overhead  (such as \cite{vuduc2005oski,williams2007optimization,bell2008efficient}). This is one reason why the original SLS controller design (\fig{SLS-controller-diagram} and \fig{original-architecture}) is in some cases quite efficient: When $\Scal$ enforces spatial-localization, the spectral elements $\Phi_x[\cdot]$ and $\Phi_u[\cdot]$ exhibit sparsity patterns, and thus computing all $\Phi_x[\cdot]\delta[\cdot]$ could potentially be easier than obtaining $-Ax\t$ and $-Bu\t$.
In most cases, the new architecture (\fig{centralized-architecture}) saves computation by replacing $T - 1$ matrix-vector multiplies with two (sparse) matrix-vector multiplies $-Ax\t$ and $-Bu\t$.

\begin{figure}
\centering
\includegraphics{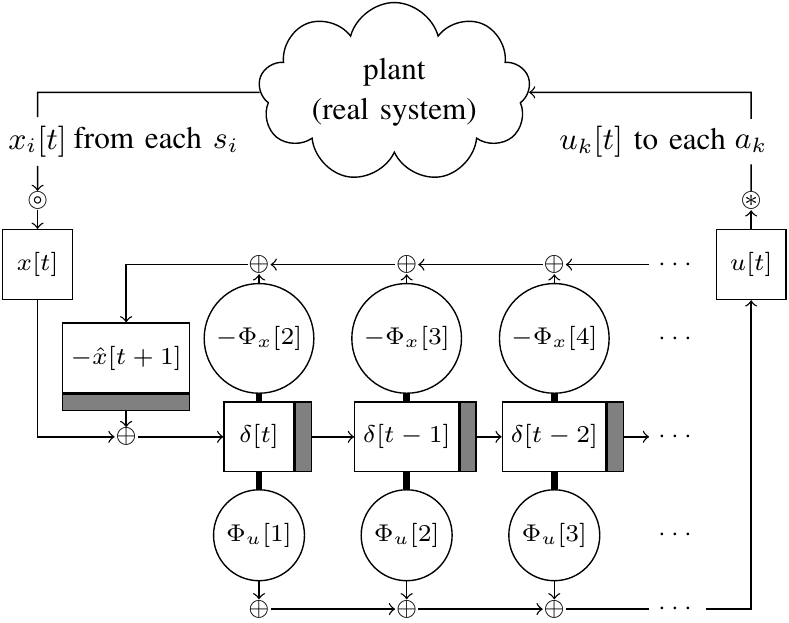}
\caption{The architecture of the original controller model in \fig{SLS-controller-diagram}. We remark that \eqn{constraint} implies ${\Phi_x}[1]=I$ and hence we don't need a multiplier at $\delta[t]$ for the convolution $I-z{\Phibf_\xbf}$. When ${\Phibf_\xbf}$ and ${\Phibf_\ubf}$ are both FIR with horizon $T > 3$, $N_x \geq N_u$, and $N_x \geq 2$, this architecture requires more computation and storage resources than \fig{centralized-architecture}.}
\label{fig:original-architecture}
\end{figure}

Despite the intuitive design, the centralized architecture raises several operational concerns. First, the centralized controller becomes the single point of failure. Also, the scalability of the centralized scheme is poor: The centralized controller has to ensure communication with all the sensors/actuators and deal with the burden of high computational load. In the next subsection, we explore some other system architectures to address these issues.

\subsection{Global State Architecture}\label{sec:global-state}
To avoid overloading the centralized controller, we can offload the computations to the nodes in the system, which results in the global state architecture. \fig{global-state} shows the partitions of the block diagram. We have a centralized global state keeper (GSK) which keeps track of the global state $\delta\t$ instead of the raw state $x\t$ at each time step $t$. Rather than directly dispatching the control signals $u\t$ to the actuators, GSK supplies $\delta\t$ to the actuators and relies on the actuators to compute $u\t$.

\begin{figure}
\centering
\includegraphics{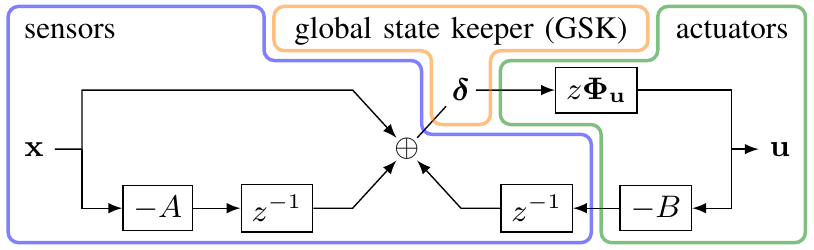}
\caption{The partitions of the block diagram for the global state architecture. \blue{The architecture of each part is depicted in \fig{global-state-architecture}.}}
\label{fig:global-state}
\vspace*{\baselineskip}
\subcaptionbox{The architecture of the global state keeper (GSK).}{
\includegraphics{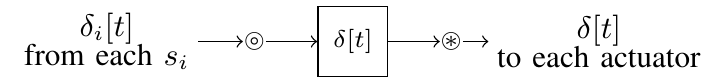}
}\\
\subcaptionbox{The architecture of each sensor $i$ ($s_i$).}{
\includegraphics{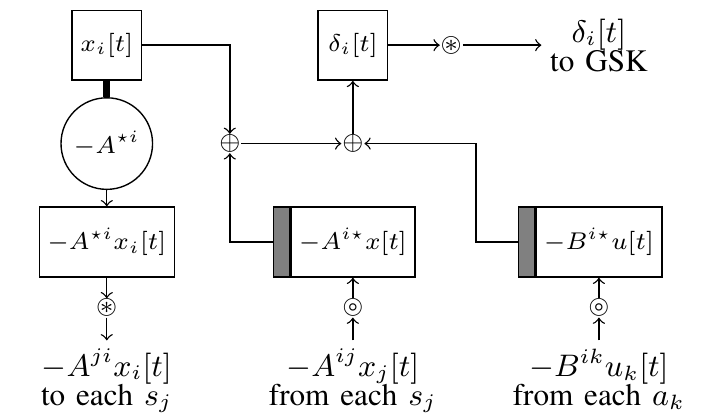}
}\\
\subcaptionbox{The architecture of each actuator $k$ ($a_k$).}{
\includegraphics{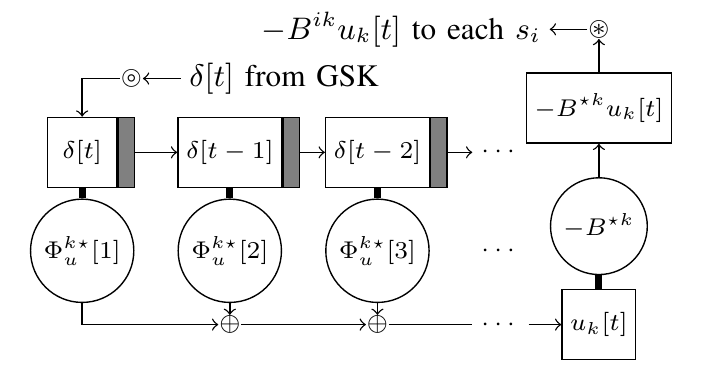}
}
\caption{The global state architecture.}
\label{fig:global-state-architecture}
\end{figure}

We illustrate the details of each node in \fig{global-state-architecture}. GSK collects $\delta_i\t$ from each sensor $s_i$. To compute $\delta_i\t$, each $s_i$ stores a column vector $-A^{\star i}$. Using the sensed state $x_i\t$, $s_i$ computes $-A^{ji}x_i\t$ and sends it to $s_j$. Meanwhile, $s_i$ collects $-A^{ij}x_j\t$ from each $x_j$ and $-B^{ik}u_k\t$ from each $a_k$. Together, $s_i$ can compute $\delta_i\t$ by
\begin{align*}
\delta_i\t =&\ x_i\t -A^{i\star}x\tm- B^{i\star} u\tm\\
=&\ x_i\t - \sum\limits_{j} A^{ij}x_j\tm - \sum\limits_{k} B^{ik}u_k\tm.
\end{align*}
The $\delta\t$ term is then forwarded to each actuator by GSK. The actuator $a_k$ can compute the control signal using the multiplier array similar to the structure in the centralized architecture. The difference is that each actuator only needs to store the rows of the spectral components ${\Phi_u^{k\star}}[\tau]$ of ${\Phibf_\ubf}$. After getting the control signal $u_k\t$, $a_k$ computes $-B^{ik}u_k\t$ for each sensor $s_i$.

One outcome of this communication pattern is that  $s_i$ only needs to receive $-A^{ij}x_j\t$ from $s_j$ if $A^{ij} \neq 0$. Similarly, only when $B^{ik} \neq 0$ does  $s_i$ need to receive information from $a_k$. This property tells us that the nodes only exchange information with their neighbors when a non-zero entry in $A$ and $B$ implies the adjacency of the corresponding nodes (as shown in \subfig{cyber-physical-comparison}{global-state}). Notice that this property holds for any feasible $\SFpair$, regardless of the constraint set $\Scal$.

The global state architecture mitigates the computation workload of the centralized architecture and hence improves the scalability. However, this architecture is subject to a single point of failure -- the GSK. Therefore, we explore some decentralized architectures in the following subsection.
\subsection{Distributed Architectures}\label{sec:distributed}

To avoid the single point of failure, we can either reinforce the centralized unit by redundancy or deconstruct it into multiple sub-units, each responsible for a smaller region. Here we take the latter option to an extreme: We remove GSK from the partitions of the control diagram (\fig{global-state}) and distribute all control functions to the nodes in the network.

A naive way to remove GSK from the partitions is to send the state $\delta\t$ directly from the sensors to the actuators, as shown in \fig{distributed-naive}. More specifically, each sensor $s_i$ would send $\delta_i\t$ to all the actuators. Each actuator $a_k$ then assembles $\delta\t$ from the received $\delta_j\t$ for all $j$ and computes $u_k\t$ accordingly.

\begin{figure}
\centering
\includegraphics{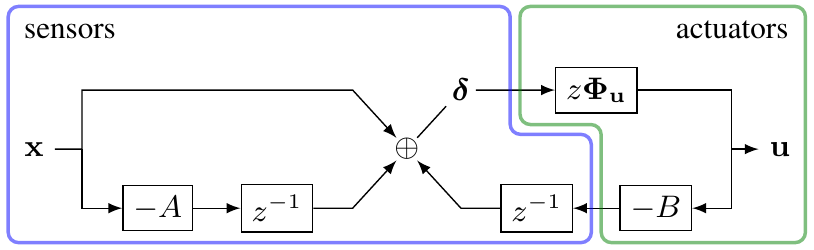}
\caption{A naive way to partition the block diagram in a distributed manner: \blue{Keep the architecture in \fig{global-state-architecture} while letting} the sensors directly send $\delta\t$ to each actuator. Those duplicated copies of $\delta\t$ waste memory resources.}
\label{fig:distributed-naive}
\end{figure}

This approach avoids the single point of failure problem. However, it stores duplicated copies of $\delta_i\t$ at each actuator, which wastes memory resources. To conserve  memory usage, we propose to send processed information to the actuators instead of the raw state $\delta\t$. We depict such a memory-conservative distributed scheme in \fig{distributed}.

\begin{figure}
\centering
\includegraphics{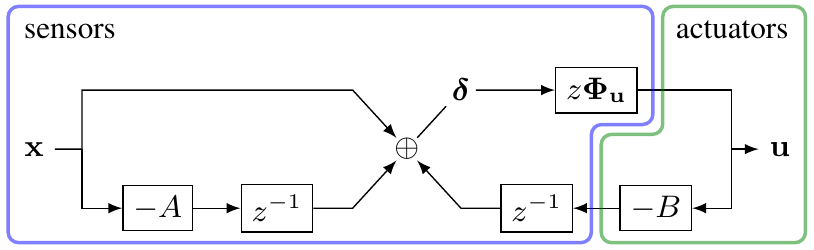}
\caption{The partition of the block diagram for the distributed architecture that conserves memory usage. \blue{The architecture of each part is depicted in \fig{distributed-architecture}.}}
\label{fig:distributed}
\vspace*{\baselineskip}
\subcaptionbox{The architecture of each sensor $i$ ($s_i$).}{
\includegraphics{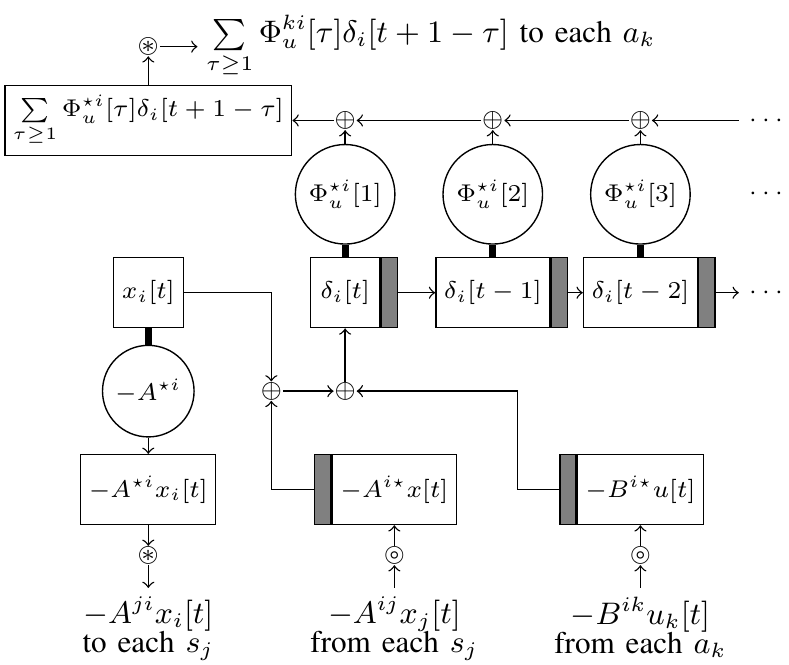}
}\\
\subcaptionbox{The architecture of each actuator $k$ ($a_k$).}{
\includegraphics{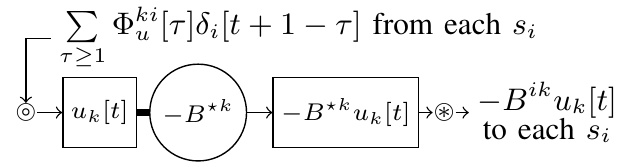}
}
\caption{The memory conservative distributed architecture.}
\label{fig:distributed-architecture}
\end{figure}

The only difference between \fig{distributed-naive} and \fig{distributed} is that we move the convolution $z{\Phibf_\ubf}$ from the actuator side to the sensor side. Implementation-wise, this change leads to the architectures in \fig{distributed-architecture}.

In \fig{distributed-architecture}, each sensor $s_i$ not only computes $\delta_i\t$, but $s_i$ also feeds $\delta_i\t$ into a multiplier array for convolution. Each multiplier in the array stores the $i^{\rm th}$ column of a spectral component of ${\Phibf_\ubf}$. The convolution result is then disseminated to each actuator.

At the actuator $a_k$, the control signal $u_k\t$ is given by the sum of the convolution results from each sensor:
\begin{align*}
u_k\t =&\ \sum\limits_{\tau \geq 1} {\Phi_u^{k\star}}[\tau]\delta[t+1-\tau]\\
=&\ \sum\limits_{i} \sum\limits_{\tau \geq 1} {\Phi_u^{ki}}[\tau]\delta_i[t+1-\tau].
\end{align*}

To confirm that \fig{distributed} is more memory efficient than \fig{distributed-naive}, we can count the number of scalar memory locations  in the corresponding architectures. Since both architectures store the matrices $A$, $B$, and $\Phi_u[\cdot]$ in a distributed manner, the total number of scalar memory locations needed for the multipliers is
\begin{align}
N_x^2 + N_x N_u + T N_x N_u.
\label{eqn:distributed-storage-requirement-multiplier}
\end{align}
We now consider the memory requirements for the buffers. Suppose ${\Phibf_\ubf}$ is FIR with horizon $T$ (or, there are $T$ multipliers in the $z{\Phibf_\ubf}$ convolution). In this case \fig{distributed-naive} has the same architecture as \fig{global-state-architecture} without the GSK.
Therefore, the total number of stored scalars (buffers) is as follows:
\begin{align}
\text{At each $s_i$:} &\ N_x + 4, \nonumber\\
\text{At each $a_k$:} &\ (T + 1) N_x + 1, \nonumber\\
\text{Total:} &\ N_x(N_x + 4) + N_u ((T + 1) N_x + 1)\nonumber\\
&\ = (T + 1) N_x N_u + N_x^2 + 4 N_x + N_u.
\label{eqn:distributed-naive-storage-requirement-buffer}
\end{align}
On the other hand, the memory conservative distributed architecture in \fig{distributed-architecture} uses the following numbers of scalars:
\begin{align}
\text{At each $s_i$:} &\ N_x + N_u + T + 3, \nonumber\\
\text{At each $a_k$:} &\ N_x + 1, \nonumber\\
\text{Total:} &\ N_x(N_x + N_u + T + 3) + N_u (N_x + 1)\nonumber\\
&\ = 2 N_x N_u + N_x^2 + (T + 3) N_x + N_u.
\label{eqn:distributed-storage-requirement-buffer}
\end{align}
In sum, the memory conservative distributed architecture requires fewer scalars in the system, and the difference is
$(T - 1) N_x (N_u - 1),$
which is non-negative since $T$, $N_x$, and $N_u \geq 1$.

Future work will look at specializing the above computation/memory costs to specific localization constraints. The above expressions serve as upper bounds that are tight for systems that difficult to localize in space (in the sense of $(d,T)$-localization defined in~\cite{WanMD18}). For systems localizable to smaller regions, the actuators only collect local information, and hence we don't need every sensor to report to every actuator.
\section{Architecture Comparison}\label{sec:comparison}

\begin{table*}
\centering
\caption{Comparison Amongst the Proposed Architectures}
\def\acol{0.36\columnwidth}
\def\bcol{0.35\columnwidth}

\label{tab:comparison}
\renewcommand{\arraystretch}{1.25}
\begin{tabular}{
|>{\centering}m{\acol}
|>{\centering}m{\bcol}
|>{\centering}m{\bcol}
|>{\centering}m{\bcol}
|>{\centering}m{\bcol}
|}
\hline
\vspace*{-0.4\baselineskip}
\diagbox[width=0.41\columnwidth]{Property}{Architecture}
& \textbf{Centralized }(\fig{centralized})
& \textbf{Global State} (\fig{global-state})
& \textbf{Naive Distributed }(\fig{distributed-naive})
& \textbf{Memory Conservative Distributed }(\fig{distributed})
\tabularnewline
\hline
\hline
\textbf{Single Point of Failure}
& yes
& yes
& no
& no
\tabularnewline
\hline
\textbf{Overall Memory Usage}
& lowest\\
(see \eqn{centralized-storage-requirement})
& highest\\
(equal to \eqn{distributed-storage-requirement-multiplier}+\eqn{distributed-naive-storage-requirement-buffer}+$N_x$)
& second highest\\
(equal to \eqn{distributed-storage-requirement-multiplier}+\eqn{distributed-naive-storage-requirement-buffer})
& second lowest\\
(equal to \eqn{distributed-storage-requirement-multiplier}+\eqn{distributed-storage-requirement-buffer})
\tabularnewline
\hline
\textbf{Single Node\\Memory Usage}
& high
& low (actuator needs more memory)
& low (actuator needs more memory)
& low (sensor needs more memory)
\tabularnewline
\hline
\textbf{Single Node\\Computation Loading}
& high (see \eqn{centralized-computation-requirement})
& low (actuator performs convolution)
& low (actuator performs convolution)
& low (sensor performs convolution)
\tabularnewline
\hline
\textbf{Single Node\\Communication Loading}
& sensor/actuator: low\\
controller: high
& sensor/actuator: medium\\
GSK: high
& sensor/actuator:\\high, but localizable
& sensor/actuator:\\high, but localizable
\tabularnewline
\hline
\end{tabular}
\end{table*}

\begin{figure*}
\centering
\subcaptionbox{Centralized Architecture}{
\includegraphics{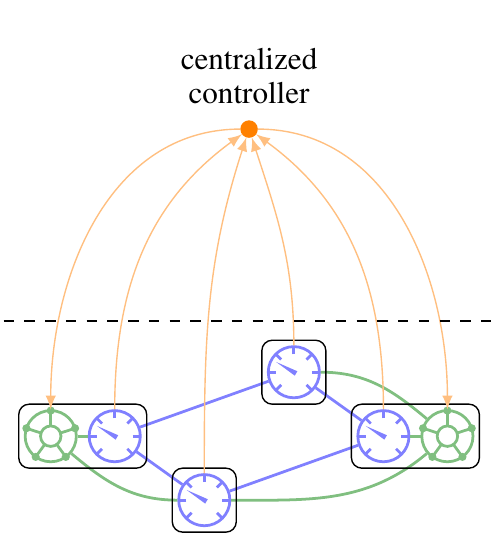}
}\quad\quad
\subcaptionbox{Global State Architecture\label{subfig:cyber-physical-comparison-global-state}}{
\includegraphics{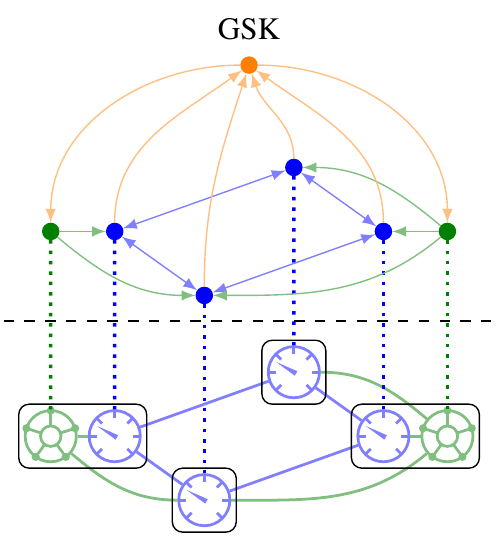}
}\quad\quad
\subcaptionbox{Distributed Architectures}{
\includegraphics{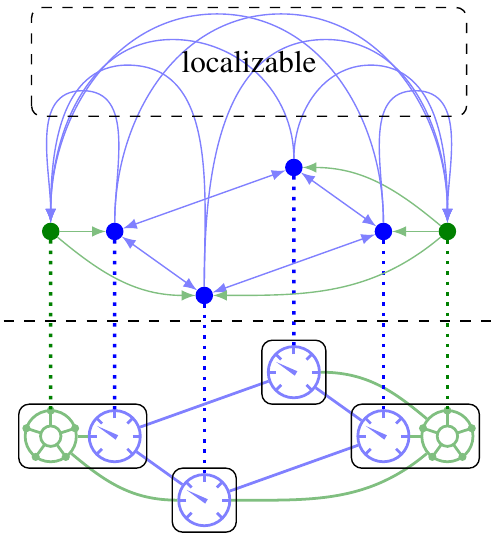}
}
\caption{The cyber-physical structures of the proposed architectures. The horizontal dashed line separates the cyber (top) and the physical (bottom) parts of the system. Each solid node represents a computation unit, and the arrow links are the communication channels. The dotted lines associate the computation units with their locations. The cyber structure of the centralized architecture is ignorant about the underlying physical system, while the other architectures manifest some correlations between the cyber and the physical structures. The distributed architectures (naive and memory conservative) replace GSK by direct communications, which can be trimmed by imposing appropriate localization constraints on ${\Phibf_\ubf}$ during the synthesis phase.}
\label{fig:cyber-physical-comparison}
\end{figure*}

As mentioned at the beginning of \sec{architecture}, different architectures implementing the same controller model allow the engineer to consider different trade-offs. Here we compare the proposed architectures and discuss their differences. Our findings are summarized in \tab{comparison}.

In terms of robustness, the centralized and the global state architectures suffer from a  single point of failure, i.e., the loss of the centralized controller or the GSK paralyzes the whole system. This also makes the system vulnerable from a cyber-security perspective. On the contrary, the distributed architectures can still function with some nodes knocked out of the network.\footnote{The system operator might need to update  ${\Phibf_\ubf}$ to maintain  performance --  such a re-design is well within the scope of the SLS framework.}

For information storage, the centralized controller uses the fewest buffers, and the global state architecture stores the most variables (because its GSK has to relay $\delta\t$). We remark that although the centralized scheme achieves the minimum storage usage at the system level, the single node memory requirement is high for the centralized controller. Conversely, the other architectures store information in a distributed manner, and a small memory is sufficient for each node.

We evaluate the computational load at each node by counting the number of performed multiplication operations. The centralized architecture aggregates all the computation at the centralized controller, while the other architectures perform distributed computing. For distributed settings, the computation overhead is slightly different at each node. The global state and naive distributed scenarios let the actuators compute the convolution. Instead, the memory conservative distributed architecture puts the multiplier arrays at the sensors.

Finally, we discuss the communication loading of the architectures. In \fig{cyber-physical-comparison}, we sketch the resulting cyber-physical structures of each scheme. The centralized architecture ignores the underlying system interconnection and installs a centralized controller to collect information and dispatch control actions. Under this framework, the sensors and the actuators only need to recognize the centralized controller, but the centralized controller must keep track of all the nodes in the system, which limits the scalability of the scheme.

The global state architecture also introduces an additional node into the system, the GSK, with which all  nodes should be contact with. Meanwhile, the sensors and actuators also communicate with each other according to the matrices $A$ and $B$. In other words, if two nodes are not directly interacting with each other in the system dynamics, they don't need to establish a direct connection.

Similarly, in the distributed architectures, the sensors and the actuators maintain connections according to $A$ and $B$. Additionally, direct communications, which are governed by the structure of ${\Phibf_\ubf}$, are added to replace the role of GSK. Although it would be slightly more complicated than having a GSK as the relay, we can localize ${\Phibf_\ubf}$ at the synthesis phase to have a sparse communication pattern.

Besides the centralized architecture, the physical structure has a direct influence on the cyber structure in all other architectures. As such, we believe further research on the deployment architectures of SLS would lead to better cyber-physical control systems.
\section{Conclusion and Future Directions}\label{sec:future}
A new internally stabilizing state-feedback controller was derived for systems that are open-loop stable. The  controller was shown to have a block diagram realization that is in some ways simpler than the ``standard'' SLS state-feedback controller.

We considered various architectures to deploy this controller to a real CPS. We illustrated and compared the memory and computation trade-offs among four different deployment architectures: centralized, global state, naive distributed, and memory conservative distributed architectures.

Future work involves removing the open-loop stability requirement and considering the output-feedback setting. Also, as pointed out in \sec{distributed}, there are still many decentralized architecture options left to explore. For example, in addition to the distributed schemes, one can install multiple local controllers to form clustered architectures and employ localization constraints to limit information exchange among clusters.

\bibliographystyle{IEEEtran}
\bibliography{Test}

\end{document}